\documentclass[journal]{IEEEtran}

\usepackage[T1]{fontenc}
\usepackage{graphicx}
\usepackage{amsthm}
\usepackage{bm, bbm}
\usepackage{mathtools}
\usepackage{amsmath}
\usepackage{amssymb}

\usepackage{algorithm}
\usepackage{algorithmicx}
\usepackage{algpseudocode}

\usepackage{tikz}
\usetikzlibrary{positioning,calc,fit,backgrounds,shapes.geometric,decorations.pathreplacing}

\algrenewcommand\algorithmicrequire{\textbf{Input:}}
\algrenewcommand\algorithmicensure{\textbf{Output:}}

\newtheorem{lemma}{Lemma}

\newtheorem{example}{Example}

\usepackage{url}
\usepackage{cite}
\usepackage{float}
\usepackage{xcolor}
\usepackage{hyperref}

\usepackage{geometry}
\date{}

\begin{document}

\title{Learning to Decode Quantum LDPC Codes via Cluster-Based Sequential Belief Propagation}

\author{Mohsen~Moradi\textsuperscript{\href{https://orcid.org/0000-0001-7026-0682}{\includegraphics[scale=0.06]{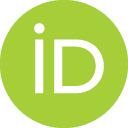}}},
Taejoon~Kim\textsuperscript{\href{https://orcid.org/0000-0002-4017-9530}{\includegraphics[scale=0.06]{figs/ORCID}}},
R\'emi~A.~Chou\textsuperscript{\href{https://orcid.org/0000-0003-4431-3175}{\includegraphics[scale=0.06]{figs/ORCID}}},
and  David~G.~M.~Mitchell\textsuperscript{\href{https://orcid.org/0000-0002-3544-9225}{ \includegraphics[scale=0.06]{figs/ORCID}}}
\thanks{Mohsen Moradi and Taejoon~Kim are with the School of Electrical, Computer and Energy Engineering, Arizona State University, Tempe, AZ 85287, USA (e-mail: mmorad11@asu.edu, taejoonkim@asu.edu).}
\thanks{R\'emi~Chou is with the Department of Computer Science and Engineering, 
The University of Texas at Arlington, Arlington, TX 76019, USA (e-mail: remi.chou@uta.edu).}
\thanks{David G. M. Mitchell is with the Klipsch School of Electrical and Computer Engineering, New Mexico State University, Las Cruces, NM 88003, USA (e-mail: dgmm@nmsu.edu).}%
\thanks{
This work is in part supported by the National Science Foundation (NSF) under grants CNS2451268, CNS2514415, ITE2515378, and CCF-2145917, and the Office of Naval Research (ONR) under Grant N000142112472.}
}

\maketitle

\begin{abstract}
Belief-propagation (BP) decoding for quantum low-density parity-check (QLDPC) codes is attractive due to its low complexity, but its performance is often limited by short cycles, degeneracy, and convergence failures. Recently, reinforcement-learning-based sequential variable-node (VN) scheduling (RL-S) was shown to improve BP decoding by learning state-dependent update orders. However, the VN-by-VN nature of that approach offers limited within-iteration parallelism, since only one VN is updated at a time. In this paper, we propose a cluster-based extension of RL-S for QLDPC codes. The VNs are partitioned into fixed clusters, and at each scheduling step the RL agent selects one cluster to update, after which all VNs in that cluster are updated in parallel using the same pre-update incoming messages. To keep the tabular state space practical for large cluster sizes, we introduce a permutation-invariant cluster state based on a normalized histogram of local mismatch weights, followed by quantization. This representation makes the number of cluster states depend on the quantization resolution rather than the cluster size. We also develop the corresponding cluster-level Markov decision process, reward function, and Q-learning update. Numerical results on representative QLDPC codes show that the proposed clustered learned scheduling preserves most of the error-rate benefit of VN-level learned sequential scheduling while substantially reducing the number of scheduling decisions per BP iteration, thereby providing an attractive latency-parallelism tradeoff.
\end{abstract}

\begin{IEEEkeywords}
Quantum error correction, quantum LDPC codes, CSS codes, belief propagation, reinforcement learning, sequential scheduling, cluster-based scheduling, syndrome decoding.
\end{IEEEkeywords}

\section{Introduction}
\label{sec:intro}

Quantum error correction is indispensable for large-scale quantum information processing because physical qubits are inherently noisy and unprotected computations accumulate errors rapidly \cite{shor1995scheme,calderbank1996good,steane1996error}. Among the many code families proposed for fault tolerance, quantum low-density parity-check (QLDPC) codes have emerged as a particularly promising direction. Their sparse stabilizer structure supports low-weight syndrome extraction, while recent code constructions have significantly improved the achievable rate-distance tradeoff, including hypergraph-product codes \cite{tillich2014quantum}, balanced-product codes \cite{breuckmann2021balanced}, lifted-product / almost-linearly distanced constructions  \cite{panteleev2022almostlinear}, quantum Tanner codes \cite{leverrier2022quantum}, and finite-length families such as bivariate bicycle (BB) codes \cite{bravyi2024high}. As a result, QLDPC codes are now widely viewed as a serious candidate for low-overhead fault-tolerant quantum architectures.
These advances make the decoder design problem increasingly important: good codes are now available, but practical performance depends critically on whether one can decode them both accurately and quickly.

While recent code constructions have greatly strengthened the case for QLDPC codes, their practical performance is still tightly linked to the design of efficient and reliable decoders.
Belief propagation (BP) is an attractive starting point because it has low complexity, relies only on local message passing on a sparse Tanner graph, and is highly successful for classical LDPC codes. For QLDPC codes, however, plain BP is well known to perform poorly \cite{poulin2008iterative,roffe2020landscape}. Two difficulties are especially important. First, the commutativity constraints of stabilizer codes induce many short cycles in the decoding graph, violating the independence assumptions of BP. Second, quantum degeneracy means that many distinct physical error patterns can produce the same syndrome and may even correspond to equivalent recovery operations on the codespace. In practice, these effects create symmetric pseudo-codewords, trapping structures, and oscillatory decoder dynamics, so conventional BP often fails to converge or converges to a logically incorrect solution \cite{raveendran2021trapping,kuo2022exploiting}.

A substantial literature has therefore developed around improving BP-based decoding for QLDPC codes. One major line of work augments BP with a post-processing stage. 
BP with ordered-statistics decoding (BP--OSD) \cite{panteleev2021degenerate,roffe2020landscape} improves BP by converting BP reliabilities into a syndrome-consistent solution through ordered-statistics post-processing.
BP with stabilizer inactivation (BP--SI) reduces the cost of this post-processing step by restricting the linear solve to a smaller set of carefully selected variables \cite{ducrest2022si}. 
More recent variants such as check-agnosia, localized statistics decoding (LSD), and low-complexity syndrome-based linear-programming post-processing seek to retain strong performance while improving parallelizability or lowering latency \cite{ducrest2024checkagnosia,hillmann2025lsd,javed2026sblp}. These methods can be very effective, but they introduce an auxiliary stage beyond plain BP whose latency, control complexity, or hardware footprint can become significant. 
A second line of work aims to improve the message-passing process itself rather than appending a heavy post-processor. Examples include memory-enhanced BP that explicitly exploits degeneracy \cite{kuo2022exploiting}, guided decimation \cite{yao2024bpgd,alinia2025decimation}, layered and random-order schedules \cite{ducrest2023layered}, graph-neural or hybrid learned decoders \cite{maan2025astra}, and recent relay-style BP refinements for real-time decoding under circuit-level noise \cite{muller2025relaybp}. The common theme is that the poor performance of plain BP is not simply a matter of insufficient iterations; rather, one must break the harmful symmetries created by short cycles and degeneracy, while preserving as much of BP's locality and low complexity as possible.

Among the many proposed improvements to BP-based QLDPC decoding, scheduling is particularly attractive. Changing the order of message updates does not alter the code itself, does not require a separate post-processing stage, and can often be implemented with relatively low overhead. In classical coding, serial and shuffled BP schedules are known to improve convergence, and reinforcement learning (RL) has also been used to learn effective update orders \cite{habib2023gldpc,moradi2025polar}. For QLDPC codes, sequential check-node (CN) and variable-node (VN) schedules have likewise been shown to outperform conventional flooding BP \cite{moradi2026sequential}. More recently, reinforcement-learning-based sequential VN scheduling (RL-S) was introduced, where the next VN is selected from a local syndrome-driven state, yielding substantial gains over both flooding and random sequential schedules while preserving the underlying BP message-update rule \cite{moradi2026rlbp}. In particular, RL-S can achieve significantly better error-correction performance than conventional BP and, for a fixed iteration cap, often succeeds in much fewer average iterations.
Subsequent extensions further addressed the latency--performance tradeoff of learned sequential BP under depolarizing noise. A list-based extension showed that the learned sequential decoder can achieve significantly better error-correction performance while maintaining the same decoding latency \cite{moradi2026rlls}. A complementary learned bit-flipping extension improved performance by selecting promising flip candidates after an initial learned sequential decoding stage and processing the resulting continuation branches in parallel, thereby improving the latency--performance tradeoff without adding a heavy post-processing step \cite{moradi2026rlsbf}.
 Taken together, these results show that, for QLDPC decoding, the scheduling policy is an important part of the decoder design itself.

In this paper, we address a central limitation of the sequential decoding approach. In RL-S, the decoder updates one VN at a time. This VN-level control is beneficial because it injects state-dependent asymmetry into BP and improves convergence, but it also creates a long serial decision chain: one BP iteration requires up to $n$ scheduling decisions for a length $n$ code.
As a result, although RL-S often converges in fewer average iterations and attains substantially better decoding performance than conventional BP, its VN-by-VN schedule can still lead to a less favorable latency profile from an implementation perspective. In particular, the fully serial scheduling process limits within iteration parallelism and can increase the latency, even when the underlying local message updates are themselves simple. 
Put differently, RL-S offers an attractive convergence-performance tradeoff, but at a scheduling granularity that may be too fine for latency-sensitive implementations.

Our goal is to preserve the main advantage of learned sequential scheduling while reducing its serial depth. To this end, we propose a \emph{cluster-based} extension of RL-S for QLDPC codes. The VNs are partitioned into fixed clusters. At each scheduling step, the RL agent selects one cluster, and all VNs in that cluster are then updated in parallel using the same pre-update incoming messages. The resulting hard-decision changes are applied simultaneously. This produces a decoder that is still sequential at the \emph{cluster} level, but parallel inside each selected cluster. The key design challenge is then no longer the BP update itself, but how to represent the state of a cluster compactly enough for tabular RL to remain practical.

A technical idea of this paper is to represent each cluster in a permutation-invariant manner through the distribution of local mismatch weights across its VNs. Rather than storing the ordered states of all VNs in a cluster, which would cause the state space to grow rapidly with cluster size, our proposed approach summarizes each cluster by a normalized histogram of local mismatch weights and then quantizes that histogram. In this way, the number of admissible cluster states depends mainly on the quantization resolution rather than on the cluster size itself. The resulting representation is compact, invariant to arbitrary reorderings of VNs within the cluster, and well matched to the syndrome-driven nature of sequential BP decoding.

Our proposed clustered RL-S decoder retains state-dependent scheduling at a coarser level while enabling parallel processing inside each chosen cluster. In that sense, it provides a direct latency--parallelism tradeoff inside the learned-scheduling framework.
We formulate the corresponding cluster-level Markov decision process (MDP), derive the reward and Q-learning update, and characterize how a parallel cluster update changes the residual mismatch pattern.
Through numerical results on representative QLDPC codes, we show that our proposed clustered decoder preserves much of the error-rate improvement of VN-level learned sequential scheduling while requiring substantially fewer scheduling decisions per BP iteration. For the \( [[288,12,18]] \) BB code \cite{bravyi2024high} over the depolarizing channel, our results show that clustered RL-S with cluster size \(B=10\), corresponding to a cluster-level scheduling depth of \(29\) per iteration, can achieve error-correction performance close to that of VN-level RL-S, which requires \(288\) scheduling decisions per iteration. Thus, under an ideal parallel cluster-update model and with sufficient quantization resolution, our proposed decoder can reduce the serial scheduling latency by approximately one order of magnitude.

The remainder of the paper is organized as follows. Section~\ref{sec:background} reviews the stabilizer and CSS framework, syndrome-based decoding under the independent Pauli-\(X\) channel, and the role of message-passing schedules in BP decoding. Section~\ref{sec:clustered} develops the proposed clustered RL-S formulation. Section~\ref{sec:complexity_latency} discusses the complexity and latency of BP, RL-S, and clustered RL-S. Section~\ref{sec:numerical} presents numerical results, and Section~\ref{sec:conclusion} concludes the paper.

\section{Background}
\label{sec:background}

\subsection{Stabilizer and CSS quantum LDPC codes}
Quantum stabilizer codes are the quantum counterpart of classical linear block codes \cite{calderbank1996good,steane1996error}.
Let $\mathcal{P}_n$ denote the $n$-qubit Pauli group, consisting of $n$-fold tensor products of the single-qubit Pauli operators $I$, $X$, $Y$, and $Z$, with phases in $\{\pm1,\pm i\}$. 
An $[[n,k,d]]$ stabilizer code is defined by an abelian subgroup $\mathcal{S}\subset\mathcal{P}_n$ generated by $n-k$ independent commuting Hermitian Pauli operators and satisfying $-I_{2^n}\notin\mathcal{S}$. 
The codespace is the joint $+1$ eigenspace of the elements of $\mathcal{S}$ and has dimension $2^k$.

A particularly important subclass is the Calderbank--Shor--Steane (CSS) family \cite{calderbank1996good,steane1996error}, for which the stabilizer generators can be separated into $X$-type and $Z$-type checks. A CSS code is specified by two binary parity-check matrices
\[
H_X\in\mathbb{F}_2^{m_X\times n},
\qquad
H_Z\in\mathbb{F}_2^{m_Z\times n},
\]
satisfying the orthogonality condition
\[
H_X H_Z^\top = 0 \quad \text{over } \mathbb{F}_2,
\]
which is the binary representation of stabilizer commutativity. A CSS code is called a QLDPC code when these matrices are sparse, i.e., when every row and every column has relatively small Hamming weight compared with the blocklength.

\subsection{Syndrome decoding and degeneracy under the independent Pauli-\texorpdfstring{$X$}{X} channel}
In this paper we first focus on the independent Pauli-$X$ channel for conceptual clarity. In this model, each physical qubit is flipped independently with probability $p_x$, and only bit-flip errors are present. The physical error can therefore be represented by a binary vector
\[
\bm{e}\in\mathbb{F}_2^n,
\]
where $e_i=1$ indicates an $X$ error on qubit $i$.

For this $X$-only setting, only the relevant $Z$-type parity checks matter. We denote by
\[
H_1 \in \mathbb{F}_2^{m\times n}
\]
the binary parity-check matrix used for this component of decoding, and the measured syndrome is
\[
\bm{s} = H_1 \bm{e} \in \mathbb{F}_2^m.
\]
A decoder outputs an estimate $\hat{\bm{e}}$ based on $\bm{s}$ and the channel parameter $p_x$.

Unlike in ordinary classical syndrome decoding, recovering the exact physical error is not the real objective. In a quantum code, different physical errors may induce the same syndrome and may even act identically on the codespace up to multiplication by stabilizers. This is the phenomenon of \emph{degeneracy} \cite{poulin2008iterative,kuo2022exploiting}. Consequently, successful decoding only requires that the residual operator between the true error and the chosen correction be logically trivial; exact physical identification is unnecessary. This feature is fundamentally beneficial from the viewpoint of quantum information, but it complicates BP decoding because the syndrome alone does not isolate a unique most-likely binary error pattern.

At the level of binary syndrome equations, all vectors in the affine set
\[
\{\bm{x}\in\mathbb{F}_2^n \mid H_1\bm{x} = \bm{s}\}
\]
are syndrome-consistent. However, these candidates need not be equally useful from the viewpoint of logical recovery. In simulations, one therefore typically checks not only syndrome consistency but also whether the residual error induces a nontrivial logical action. For BP-based decoders, degeneracy manifests itself as ambiguity in the soft information: multiple nearby candidates may explain the same syndrome, leading to weak or oscillatory beliefs rather than a clean separation between correct and incorrect local decisions \cite{poulin2008iterative,raveendran2021trapping}.

\subsection{Tanner graph and belief-propagation decoding}
The binary matrix $H_1$ defines a Tanner graph with VNs
$v_1,\ldots,v_n$ and check nodes $c_1,\ldots,c_m$. For
$i\in\{1,\ldots,n\}$ and $j\in\{1,\ldots,m\}$, VN $v_i$ is adjacent
to check node $c_j$ if and only if $H_1(j,i)=1$. For each VN $v_i$,
define
\[
\mathcal N(v_i)=\left\{c_j \mid H_1(j,i)=1\right\},
\]
and for each check node $c_j$, define
\[
\mathcal N(c_j)=\left\{v_i \mid H_1(j,i)=1\right\}.
\]

For the independent Pauli-$X$ channel, the a priori log-likelihood ratio (LLR) at each VN is
\[
\lambda_i \triangleq \log \frac{1-p_x}{p_x}.
\]
In standard sum-product BP, variable-to-check and check-to-variable messages are iteratively updated according to local rules on the Tanner graph. Denoting by $m_{v_i\to c_j}^{(\ell)}$ and $m_{c_j\to v_i}^{(\ell)}$ the VN-to-CN and CN-to-VN messages associated with edge $(v_i,c_j)$ at iteration $\ell$, the CN update takes the form

{\small
\begin{equation}
m_{c_j\to v_i}^{(\ell)}
=
2\tanh^{-1}\!\left(
(-1)^{s_j}
\prod_{v_{i'}\in \mathcal{N}(c_j)\setminus \{v_i\}}
\tanh\!\left(\frac{m_{v_{i'}\to c_j}^{(\ell-1)}}{2}\right)
\right),
\label{eq:bp_cn_update_background}
\end{equation}
}
while the VN update is
\begin{equation}
m_{v_i\to c_j}^{(\ell)}
=
\lambda_i+
\sum_{c_{j'}\in \mathcal{N}(v_i)\setminus \{c_j\}}
m_{c_{j'}\to v_i}^{(\ell)}.
\label{eq:bp_vn_update_background}
\end{equation}
The posterior LLR at VN $v_i$ is then
\begin{equation}
L_i^{(\ell)}
=
\lambda_i+
\sum_{c_j\in\mathcal N(v_i)}
m_{c_j\to v_i}^{(\ell)},
\label{eq:bp_posterior_background}
\end{equation}
and the hard decision is
\begin{equation}
\hat e_i^{(\ell)} =
\begin{cases}
0, & L_i^{(\ell)} > 0,\\[1mm]
1, & L_i^{(\ell)} \le 0.
\end{cases}
\label{eq:bp_hard_decision_background}
\end{equation}
The resulting estimated syndrome is $H_1\hat{\bm e}^{(\ell)}$.

For QLDPC codes, the issue is that the Tanner graph contains many short cycles; information can therefore return quickly to its origin and be repeatedly overcounted.
In addition, degeneracy implies that local reliabilities need not concentrate cleanly on a single candidate error pattern. This is why plain flooding BP often stalls or oscillates on QLDPC instances even when its classical counterpart is effective \cite{poulin2008iterative,roffe2020landscape,kuo2022exploiting}.

For later use, it is convenient to define the residual mismatch vector associated with a current hard estimate $\hat{\bm e}$ as
\begin{equation}
\bm{\delta} \triangleq \bm{s} \oplus (H_1\hat{\bm e}),
\label{eq:residual_mismatch_background}
\end{equation}
where $\oplus$ denotes componentwise addition modulo $2$. Thus, $\delta_j=1$ indicates that check node $c_j$ is currently unsatisfied.

\subsection{Sequential scheduling, learned scheduling, and the motivation for clustering}
The conventional BP decoder uses a \emph{flooding} schedule: all CN messages are updated in parallel from the previous VN messages, and then all VN messages are updated in parallel from the new CN messages. A \emph{sequential} or \emph{serial} schedule updates nodes one at a time and can immediately reuse fresher messages. Such schedules are well known to help classical BP decoding, and related schedule-optimization ideas have also been explored with RL \cite{habib2023gldpc,moradi2025polar}.

For QLDPC decoding, scheduling is especially relevant because the decoding graph has many short cycles. In \cite{moradi2026sequential} it is shown that sequential VN scheduling (SVNS) and sequential CN scheduling (SCNS) can already improve performance over flooding BP. Building on that observation, \cite{moradi2026rlbp} introduced a learned SVNS (RL-S) decoder in which, within each BP iteration, an RL agent chooses the next VN to update based on a local syndrome-driven state. After a VN is selected, its incident messages and local hard decision are updated, the residual mismatch is refreshed locally, and the VN is removed from the remaining set for that BP iteration. This can substantially reduce non-convergence. 

The advantage of this VN-level learned schedule is also its main drawback: one VN is processed at a time. Hence, for a length $n$ code, a single BP iteration may require up to $n$ separate scheduling decisions. From a hardware perspective, this creates a long serial control path and limits within-iteration parallelism. The clustered decoder proposed in this paper is designed precisely to alleviate that limitation. Instead of selecting a single VN, the RL agent selects a cluster of VNs, after which all VNs in the chosen cluster are updated in parallel from the same pre-update information. The challenge is then to preserve the essential syndrome-driven information needed for learned scheduling while moving from a node-level to a cluster-level action space. The next section develops that formulation.

\section{Clustered RL Scheduling}\label{sec:clustered}

The VN-by-VN learned schedule in RL-S improves convergence by injecting controlled asymmetry into BP, but it offers limited within-iteration parallelism because only one VN is updated at a time. To reduce the scheduling depth while retaining a learned, state-dependent update order, we group VNs into clusters and let the RL agent choose which cluster to process next. Once a cluster is selected, all VNs in that cluster are updated in parallel using the same pre-update incoming messages, and the resulting hard-decision changes are then applied simultaneously. In this way, the decoder preserves the main idea of learned scheduling, but operates at a coarser and more hardware-friendly granularity.

Fix an integer cluster size $B$ and partition the VN set
$\{v_1,\ldots,v_n\}$ into
\[
N_{\mathrm{cl}} \triangleq \left\lceil \frac{n}{B}\right\rceil
\]
disjoint clusters
\[
\mathcal{C}_1,\ldots,\mathcal{C}_{N_{\mathrm{cl}}},
\qquad
\mathcal{C}_a \subseteq \{v_1,\ldots,v_n\},
\]
such that
\[
\mathcal{C}_a \cap \mathcal{C}_{a'} = \emptyset \quad (a\neq a'),
\qquad
\bigcup_{a=1}^{N_{\mathrm{cl}}} \mathcal{C}_a = \{v_1,\ldots,v_n\}.
\]
For each cluster $\mathcal C_a$, define its associated index set as
\[
\mathcal I_a=\{i \mid v_i\in\mathcal C_a\}.
\]
In our numerical results, this partition is drawn once at initialization and then kept fixed throughout both training and testing. Specifically, for each of the first $N_{\mathrm{cl}}-1$ clusters, we assign $B$ VNs uniformly at random without replacement, and the final cluster contains the remaining VNs.
The random VN partitions used in this work are not designed to
guarantee conflict-free hardware implementations. Accordingly, the
latency analysis assumes ideal parallel cluster updates. Related work
on classical LDPC decoding constructs conflict-free CN clusters
\cite{ozkan2026clustered}; extending such graph-aware partitioning
methods to VN clusters for QLDPC decoding is left for future work.

\subsection{Cluster-level MDP formulation}

The cluster-level Markov decision process (MDP) follows the VN-based RL-S decoder, except that the actions now correspond to clusters rather than individual VNs. An episode samples a crossover probability $p_x$ from the training set, generates an error pattern and its syndrome, initializes the BP beliefs and messages, and then runs clustered RL-S decoding for at most $I_{\max}$ BP iterations.

The environment state is the full decoder configuration,
\[
\bigl\{m_{v_i\to c_j},m_{c_j\to v_i}:(v_i,c_j)\in E\bigr\},
\quad \{L_i\},\quad \hat{\bm e},\quad \bm{\delta},
\]
where \(E\) denotes the edge set of the Tanner graph.
Inside each BP iteration, the agent maintains a remaining set
\[
\mathcal{R}_t \subseteq \{1,\ldots,N_{\mathrm{cl}}\}
\]
of clusters that have not yet been visited in that iteration. Choosing action $a_t\in\mathcal{R}_t$ means that cluster $\mathcal{C}_{a_t}$ is selected and all VNs in that cluster are updated in parallel. After the update, the selected cluster is removed from the remaining set. Thus, as in the VN-based decoder, the schedule is applied without replacement within each BP iteration.

\subsection{Cluster state}\label{sec:clusterState}

A direct state description that stores the ordered collection of all VN-level states inside a cluster is unnecessarily large and depends on the arbitrary ordering of the VNs within that cluster. To obtain a compact and permutation-invariant representation, we summarize the local mismatch weights inside each cluster.

For each VN $v_i$, let
\[
A_i \triangleq |\mathcal{N}(v_i)|,
\qquad
A_{\max} \triangleq \max_i A_i.
\]
Using a fixed deterministic ordering of the neighboring checks of VN $v_i$, write
\[
\mathcal{N}(v_i)=(c_{j_1},c_{j_2},\ldots,c_{j_{A_i}}).
\]
We define the local binary mismatch vector \cite{moradi2026rlbp}
\[
\bm{b}_i
\triangleq
[\delta_{j_1},\delta_{j_2},\ldots,\delta_{j_{A_i}},0,\ldots,0]
\in \{0,1\}^{A_{\max}},
\]
where the vector is zero-padded to length $A_{\max}$ when $A_i<A_{\max}$.

From this local pattern, we define the local mismatch weight
\[
\omega_i \triangleq \|\bm{b}_i\|_1,
\]
which is simply the number of unsatisfied neighboring checks seen by VN $v_i$.

For a cluster $\mathcal{C}_a$, we aggregate these local mismatch weights into the histogram
\[
c_r(a)
\triangleq
\bigl|\{v_i\in\mathcal{C}_a \mid \omega_i = r\}\bigr|,
\qquad
r\in\{0,1,\ldots,A_{\max}\}.
\]
These counts satisfy
\[
\sum_{r=0}^{A_{\max}} c_r(a)=|\mathcal{C}_a|.
\]
The corresponding unquantized cluster state is
\begin{equation}
\bm{c}_a
\triangleq
\bigl(c_0(a),c_1(a),\ldots,c_{A_{\max}}(a)\bigr).
\label{eq:raw_cluster_state}
\end{equation}
Note that the size of $\bm{c}_a$ still grows with the cluster size. To make the state representation comparable across different cluster sizes and to control the state-space growth, we instead normalize the histogram by the cluster size as
\[
\rho_r(a)
\triangleq
\frac{c_r(a)}{|\mathcal{C}_a|},
\qquad
r=0,1,\ldots,A_{\max},
\]
so that
\[
\sum_{r=0}^{A_{\max}} \rho_r(a)=1.
\]

To control the state-space size, we choose a quantization resolution \(L\). We quantize
\(\rho(a)\) onto the integer simplex using a largest-remainder rule. First, define
\[
\tilde q_r(a)=L\rho_r(a),~~
b_r(a)=\lfloor \tilde q_r(a)\rfloor,~~
f_r(a)=\tilde q_r(a)-b_r(a).
\]
Let
\[
D=L-\sum_{r=0}^{A_{\max}} b_r(a).
\]
Then \(q_r(a)\) is obtained by setting \(q_r(a)=b_r(a)\) and adding one to the
\(D\) indices with the largest fractional parts \(f_r(a)\), with ties broken by
increasing \(r\). Thus,
\begin{equation}
q_r(a)\in\mathbb Z_{\ge 0},\qquad
\sum_{r=0}^{A_{\max}}q_r(a)=L .
\label{eq:simplex_quantization}
\end{equation}
The resulting quantized cluster state is
\begin{equation}
\bm{\sigma}_a
\triangleq
\bigl(q_0(a),q_1(a),\ldots,q_{A_{\max}}(a)\bigr).
\label{eq:quantized_cluster_state}
\end{equation}
This representation is permutation-invariant within the cluster, captures the distribution of local mismatch weights, and, crucially, makes the number of possible states depend on the quantization resolution $L$ rather than the cluster size.

\begin{figure}[t]
\centering
\resizebox{\columnwidth}{!}{\begin{tikzpicture}[
    font=\footnotesize,
    x=1cm,
    y=1cm,
    edge/.style={
        line width=0.45pt,
        black!85
    },
    vn/.style={
        draw,
        circle,
        fill=white,
        inner sep=0pt,
        minimum size=6.8mm
    },
    cn/.style={
        draw,
        rounded corners=1.2pt,
        fill=white,
        inner xsep=4pt,
        inner ysep=2.2pt,
        minimum height=6.2mm
    },
    wnode/.style={
        draw,
        circle,
        fill=blue!12,
        inner sep=-.5pt,
        minimum size=5.8mm,
        font=\footnotesize,
        align=center
    },
    delta0/.style={
        draw,
        circle,
        fill=gray!18,
        inner sep=0pt,
        minimum size=5.2mm
    },
    delta1/.style={
        draw,
        circle,
        fill=red!18,
        inner sep=0pt,
        minimum size=5.2mm
    }
]

\coordinate (V1) at (0.55,3.05);
\coordinate (V2) at (1.95,3.05);
\coordinate (V3) at (3.35,3.05);
\coordinate (V4) at (4.75,3.05);
\coordinate (V5) at (6.15,3.05);

\coordinate (W1) at (0.55,4.15);
\coordinate (W2) at (1.95,4.15);
\coordinate (W3) at (3.35,4.15);
\coordinate (W4) at (4.75,4.15);
\coordinate (W5) at (6.15,4.15);

\coordinate (C1) at (0.00,1.15);
\coordinate (C2) at (1.30,1.15);
\coordinate (C3) at (2.60,1.15);
\coordinate (C4) at (3.90,1.15);
\coordinate (C5) at (5.20,1.15);
\coordinate (C6) at (6.50,1.15);

\coordinate (D1) at (0.00,0.25);
\coordinate (D2) at (1.30,0.25);
\coordinate (D3) at (2.60,0.25);
\coordinate (D4) at (3.90,0.25);
\coordinate (D5) at (5.20,0.25);
\coordinate (D6) at (6.50,0.25);

\draw[
    blue!70!black,
    dashed,
    rounded corners=4pt,
    line width=0.5pt
] (-0.10,2.58) rectangle (6.60,4.62);

\node[
    font=\footnotesize,
    text=blue!70!black
] at (3.25,4.82)
{Cluster $\mathcal{C}_a=\{v_{i_1},v_{i_2},v_{i_3},v_{i_4},v_{i_5}\}$};

\draw[edge] (W1) -- (V1);
\draw[edge] (W2) -- (V2);
\draw[edge] (W3) -- (V3);
\draw[edge] (W4) -- (V4);
\draw[edge] (W5) -- (V5);


\draw[edge] (V1) -- (C1);
\draw[edge] (V1) -- (C2);
\draw[edge] (V1) -- (C4);

\draw[edge] (V2) -- (C1);
\draw[edge] (V2) -- (C3);
\draw[edge] (V2) -- (C6);

\draw[edge] (V3) -- (C2);
\draw[edge] (V3) -- (C4);
\draw[edge] (V3) -- (C6);

\draw[edge] (V4) -- (C1);
\draw[edge] (V4) -- (C3);
\draw[edge] (V4) -- (C5);

\draw[edge] (V5) -- (C2);
\draw[edge] (V5) -- (C3);
\draw[edge] (V5) -- (C4);

\draw[edge] (C1) -- (D1);
\draw[edge] (C2) -- (D2);
\draw[edge] (C3) -- (D3);
\draw[edge] (C4) -- (D4);
\draw[edge] (C5) -- (D5);
\draw[edge] (C6) -- (D6);


\node[wnode] at (W1) {\scalebox{0.7}{\(\begin{array}{c}\omega_{i_1}\\[1pt]=1\end{array}\)}};
\node[wnode] at (W2) {\scalebox{0.7}{\(\begin{array}{c}\omega_{i_2}\\[1pt]=2\end{array}\)}};
\node[wnode] at (W3) {\scalebox{0.7}{\(\begin{array}{c}\omega_{i_3}\\[1pt]=0\end{array}\)}};
\node[wnode] at (W4) {\scalebox{0.7}{\(\begin{array}{c}\omega_{i_4}\\[1pt]=3\end{array}\)}};
\node[wnode] at (W5) {\scalebox{0.7}{\(\begin{array}{c}\omega_{i_5}\\[1pt]=1\end{array}\)}};

\node[vn] at (V1) {$v_{i_1}$};
\node[vn] at (V2) {$v_{i_2}$};
\node[vn] at (V3) {$v_{i_3}$};
\node[vn] at (V4) {$v_{i_4}$};
\node[vn] at (V5) {$v_{i_5}$};

\node[cn] at (C1) {$c_1$};
\node[cn] at (C2) {$c_2$};
\node[cn] at (C3) {$c_3$};
\node[cn] at (C4) {$c_4$};
\node[cn] at (C5) {$c_5$};
\node[cn] at (C6) {$c_6$};

\node[delta1] at (D1) {$1$};
\node[delta0] at (D2) {$0$};
\node[delta1] at (D3) {$1$};
\node[delta0] at (D4) {$0$};
\node[delta1] at (D5) {$1$};
\node[delta0] at (D6) {$0$};

\node[font=\scriptsize] at (6.95,0.25) {$\delta_j$};

\end{tikzpicture}}
\caption{Tanner subgraph used in Example~\ref{ex:1}. The highlighted cluster
\(\mathcal{C}_a=\{v_{i_1},\ldots,v_{i_5}\}\) is connected to checks
\(c_1,\ldots,c_6\) according to the listed neighbor sets. The values below the check nodes are the
current residual mismatches \(\delta_j\), and the circles above the variable nodes show the induced
local mismatch weights \(\omega_{i_k}\).}
\label{fig:cluster_example1_tanner}
\end{figure}

\begin{example}\label{ex:1}
    
Suppose $A_{\max}=3$ and a cluster $\mathcal{C}_a$ of size $B=5$ contains
\[
\mathcal{C}_a=\{v_{i_1},v_{i_2},v_{i_3},v_{i_4},v_{i_5}\}.
\]
Assume that the current residual mismatch vector is
\[
\bm{\delta}=(\delta_1,\delta_2,\delta_3,\delta_4,\delta_5,\delta_6)
=(1,0,1,0,1,0),
\]
so that checks $c_1,c_3,c_5$ are currently unsatisfied and
checks $c_2,c_4,c_6$ are satisfied.

Assume further that the neighboring checks of the VNs in
$\mathcal{C}_a$ are, in the fixed deterministic order used by the decoder,
\[
\mathcal{N}(v_{i_1})=(c_1,c_2,c_4),\qquad
\mathcal{N}(v_{i_2})=(c_1,c_3,c_6),
\]
\[
\mathcal{N}(v_{i_3})=(c_2,c_4,c_6),\qquad
\mathcal{N}(v_{i_4})=(c_1,c_3,c_5),
\]
\[
\mathcal{N}(v_{i_5})=(c_2,c_3,c_4).
\]
The Tanner subgraph corresponding to this setup is illustrated in
Fig.~\ref{fig:cluster_example1_tanner}. The circles below the check nodes
show the current residual mismatch values \(\delta_j\), while the circles
above the VNs show the resulting local mismatch weights \(\omega_{i_k}\)
computed below.

The corresponding local mismatch vectors are obtained by
restricting $\bm{\delta}$ to the neighboring checks of each VN as
\[
\bm{b}_{i_1}=[\delta_1,\delta_2,\delta_4]=[1,0,0],\qquad
\bm{b}_{i_2}=[\delta_1,\delta_3,\delta_6]=[1,1,0],
\]
\[
\bm{b}_{i_3}=[\delta_2,\delta_4,\delta_6]=[0,0,0],\qquad
\bm{b}_{i_4}=[\delta_1,\delta_3,\delta_5]=[1,1,1],
\]
\[
\bm{b}_{i_5}=[\delta_2,\delta_3,\delta_4]=[0,1,0].
\]
Their local mismatch weights are therefore
\[
\omega_{i_1}=1,\qquad
\omega_{i_2}=2,\qquad
\omega_{i_3}=0,\qquad
\omega_{i_4}=3,~~
\omega_{i_5}=1.
\]
Hence the histogram counts are
\[
c_0(a)=1,\qquad
c_1(a)=2,\qquad
c_2(a)=1,\qquad
c_3(a)=1.
\]
If we use the raw histogram representation, the resulting cluster state is
\[
\bm{\sigma}_a
=
\bigl(c_0(a),c_1(a),c_2(a),c_3(a)\bigr)
=
(1,2,1,1).
\]
In words, the cluster contains one VN with no unsatisfied
neighboring checks, two VNs with one unsatisfied neighboring
check, one VN with two unsatisfied neighboring checks, and one
VN with three unsatisfied neighboring checks.

If instead we use the normalized representation, then
\[
(\rho_0(a),\rho_1(a),\rho_2(a),\rho_3(a))
=
\left(\frac{1}{5},\frac{2}{5},\frac{1}{5},\frac{1}{5}\right).
\]
For example, if the quantization resolution is \(L=8\), then
\[
L\rho(a)
=
(1.6,3.2,1.6,1.6).
\]
The integer floor vector is therefore
\[
(b_0(a),b_1(a),b_2(a),b_3(a))
=
(1,3,1,1),
\]
and the corresponding fractional remainders are
\[
(f_0(a),f_1(a),f_2(a),f_3(a))
=
(0.6,0.2,0.6,0.6).
\]
Since
\[
D
=
L-\sum_{r=0}^{3} b_r(a)
=
2,
\]
we must add one unit to the two largest fractional remainders. The largest
remainders are tied at \(r=0,2,3\); with ties broken by increasing \(r\), the
two selected positions are \(r=0\) and \(r=2\). Hence,
\[
(q_0(a),q_1(a),q_2(a),q_3(a))
=
(2,3,2,1),
\]
which satisfies
\[
q_0(a)+q_1(a)+q_2(a)+q_3(a)=8.
\]

\end{example}

\subsection{State-space size under quantization}

For a fixed $A_{\max}$ and quantization resolution $L$, the number of possible quantized cluster states is the number of nonnegative integer solutions of
\[
q_0+q_1+\cdots+q_{A_{\max}}=L,
\]
namely
\begin{equation}\label{eq:num_states_L}
N_{\mathrm{states}}^{(L)}
=
\binom{L+A_{\max}}{A_{\max}}.
\end{equation}
Hence the tabular state space no longer scales with the cluster size $B$. Note that without quantization, the number of possible cluster states is
\begin{equation}\label{eq:num_states_B}
N_{\mathrm{states}}^{(B)}
=
\binom{B+A_{\max}}{A_{\max}}.
\end{equation}

For the $[[882, 24, 18\leq d \leq 24]]$ code B1 \cite{panteleev2021degenerate}, we have $A_{\max}=3$, so
\[
N_{\mathrm{states}}^{(L)}=\binom{L+3}{3}.
\]
As a concrete example, when $L=8$,
\[
N_{\mathrm{states}}^{(8)}=\binom{11}{3}=165.
\]
Therefore, if $n=882$ and $B=20$, then $N_{\mathrm{cl}}=45$ and the Q-table contains
\[
165\times 45 = 7425
\]
entries. If instead $B=60$, then $N_{\mathrm{cl}}=15$ and the Q-table contains
\[
165\times 15 = 2475
\]
entries. In both cases, the cluster-state resolution is controlled by $L$, not by $B$.

\subsection{Parallel cluster update}\label{subsec:cluster_update}

When a cluster $\mathcal{C}_a$ is selected, all VNs in that cluster are updated
in parallel. More precisely, each VN $v_i\in\mathcal{C}_a$ first computes its
updated belief and tentative hard decision using the \emph{pre-update} incoming
messages. After all tentative decisions in the cluster have been formed, the
induced hard-decision flips are applied simultaneously. This makes the
within-cluster update independent of any arbitrary ordering of the VNs inside
the cluster.

Using the residual mismatch $\bm{\delta}$,
suppose that the selected cluster update changes the hard decisions
on the index set $\mathcal{F}\subseteq\mathcal{I}_a$.
If $\bm{u}_i$ denotes the $i$th standard basis vector, then the
updated hard estimate is
\[
\hat{\bm e}'
=
\hat{\bm e}
\oplus
\left(
\bigoplus_{i\in\mathcal{F}}\bm{u}_i
\right).
\]
The following lemma describes how this simultaneous set of flips changes the
residual mismatch.

\begin{lemma}\label{lem:delta_cluster}
The updated residual mismatch
\[
\bm{\delta}'=\bm{s}\oplus(H_1\hat{\bm e}')
\]
satisfies, for each check index $j$,
\[
\delta'_j
=
\delta_j
\oplus
\left(\sum_{i\in\mathcal{F}} H_1(j,i)\right)\!\!\!\!\pmod{2}.
\]
Equivalently, check node $c_j$ changes its residual status if and only if it is adjacent
to an odd number of flipped VNs in $\mathcal{F}$.
\end{lemma}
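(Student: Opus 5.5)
The plan is to use linearity of syndrome computation over $\mathbb{F}_2$. Write the flip pattern as $\bm{f}\triangleq\bigoplus_{i\in\mathcal F}\bm u_i$. Then $\hat{\bm e}'=\hat{\bm e}\oplus\bm f$. Since the map $\bm x\mapsto H_1\bm x$ is $\mathbb{F}_2$-linear, we have $H_1\hat{\bm e}'=H_1\hat{\bm e}\oplus H_1\bm f$.

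Substituting this into the definition \eqref{eq:residual_mismatch_background} applied to $\hat{\bm e}'$ gives
\[
\bm\delta'=\bm s\oplus H_1\hat{\bm e}\oplus H_1\bm f=\bm\delta\oplus H_1\bm f.
\]
This uses only associativity and commutativity of componentwise addition mod $2$.

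Next, I read off the $j$th coordinate. Because the indices in $\mathcal F$ are distinct, $\bm f$ has entries $f_i=1$ for $i\in\mathcal F$ and $0$ otherwise. Hence
\[
(H_1\bm f)_j=\sum_{i=1}^n H_1(j,i)f_i \pmod 2=\sum_{i\in\mathcal F}H_1(j,i)\pmod 2.
\]
This yields the displayed formula for $\delta'_j$.

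For the equivalent statement, note that $H_1(j,i)=1$ exactly when $v_i\in\mathcal N(c_j)$. The sum therefore counts $|\mathcal N(c_j)\cap\{v_i: i\in\mathcal F\}|$. Thus $\delta'_j\neq\delta_j$ if and only if this count is odd.

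No step is a genuine obstacle. The only point needing care is that the flips are applied simultaneously to the same $\hat{\bm e}$, so $\bm f$ is a single vector with distinct support; this is where the parallel-update model of Section~\ref{subsec:cluster_update} enters. The result is also independent of any within-cluster ordering, since $\mathbb{F}_2$ addition is commutative.
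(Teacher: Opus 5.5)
Your proposal is correct and matches the paper's proof: both use the $\mathbb{F}_2$-linearity of $\bm x\mapsto H_1\bm x$ to get $\bm\delta'=\bm\delta\oplus H_1\bm f$ and then read off the $j$th coordinate. The differences are cosmetic: you collect the flips into a single vector $\bm f$ instead of keeping the sum $\bigoplus_{i\in\mathcal F}H_1\bm u_i$, and you spell out the odd-adjacency interpretation through $\mathcal N(c_j)$ more explicitly.
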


\begin{proof}
By linearity over $\mathbb{F}_2$,
\[
H_1\hat{\bm e}'
=
H_1\hat{\bm e}
\oplus
\bigoplus_{i\in\mathcal{F}} H_1\bm{u}_i .
\]
Therefore,
\[
\bm{\delta}'
=
\bm{s}\oplus H_1\hat{\bm e}'
=
\bm{s}\oplus H_1\hat{\bm e}
\oplus
\bigoplus_{i\in\mathcal{F}} H_1\bm{u}_i
=
\bm{\delta}
\oplus
\bigoplus_{i\in\mathcal{F}} H_1\bm{u}_i .
\]
Taking the $j$th component gives
\[
\delta'_j
=
\delta_j
\oplus
\bigoplus_{i\in\mathcal{F}} H_1(j,i)
=
\delta_j
\oplus
\left(\sum_{i\in\mathcal{F}} H_1(j,i)\right)\!\!\!\!\pmod{2}.
\]
Hence, check node $c_j$ changes its residual status exactly when it is connected to an odd number of flipped VNs.
\end{proof}

Lemma~\ref{lem:delta_cluster} shows that the effect of a parallel cluster update on the residual mismatch is completely local: only checks adjacent to at least one flipped VN in the selected cluster can change, and each such check changes according to an odd-parity rule.

\subsection{Reward and Q-learning update for cluster actions}\label{subsec:qlearn_cluster}

For the the current mismatch weight \(w = \|\bm{\delta}\|_1\), suppose that at step $t$ the agent selects cluster $a_t\in\mathcal{R}_t$, where $\mathcal{R}_t$ is the set of clusters that remain available in the current BP iteration. Let
$w_{\mathrm{before}}$
and
$w_{\mathrm{after}}$
denote the mismatch weights immediately before and after applying the corresponding parallel cluster update.
We define the one-step reward as the normalized mismatch reduction
\begin{equation}\label{eq:cluster_reward}
r_t
\triangleq
\frac{w_{\mathrm{before}}-w_{\mathrm{after}}}
{\sum_{i\in\mathcal{I}_{a_t}} A_i},
\qquad
A_i \triangleq |\mathcal{N}(v_i)|.
\end{equation}
We add a terminal bonus of $+1$ when the current estimate becomes
syndrome-consistent, i.e., 
\[
r_t \leftarrow r_t + 1
\qquad \text{if } w_{\mathrm{after}} = 0.
\]

After selecting $a_t$, we remove that action from the remaining set, i.e.,
\[
\mathcal{R}_{t+1}
\triangleq
\mathcal{R}_t \setminus \{a_t\}.
\]
Let $\bm{\sigma}_a$ denote the current quantized cluster state associated with action $a$. The one-step lookahead value is
\begin{equation}\label{eq:best_future_cluster}
V_{t+1}
\triangleq
\max_{a'\in\mathcal{R}_{t+1}} Q(\bm{\sigma}_{a'},a'),
\end{equation}
with the convention that $V_{t+1}=0$ when $\mathcal{R}_{t+1}=\emptyset$.

The tabular Q-learning update is then
\begin{equation}\label{eq:qlearn_cluster_update}
Q(\bm{\sigma}_{a_t},a_t)
\leftarrow
Q(\bm{\sigma}_{a_t},a_t)
+
\alpha\Bigl(
r_t + \gamma V_{t+1} - Q(\bm{\sigma}_{a_t},a_t)
\Bigr),
\end{equation}
where $\alpha\in(0,1]$ is the learning rate and $\gamma\in[0,1)$ is the discount factor.

\subsection{Extension to the depolarizing channel}
\label{subsec:depolarizing_extension}

The clustered RL-S framework can also be applied to the depolarizing channel by using the same quaternary BP representation as in the VN-level RL-S decoder \cite{moradi2026rlbp}. In this setting, each qubit error is a Pauli symbol
\[
q_i\in\{I,X,Y,Z\},
\]
with
\[
\Pr(q_i=I)=1-p,
\]
\[
\Pr(q_i=X)=\Pr(q_i=Y)=\Pr(q_i=Z)=\frac{p}{3},
\]
where \(p\) denotes the physical depolarizing error probability per qubit.
For a Pauli symbol \(q\), define its binary components as
\[
e^X(q)\triangleq \mathbbm{1}[q\in\{X,Y\}],
\qquad
e^Z(q)\triangleq \mathbbm{1}[q\in\{Y,Z\}].
\]
Thus, the \(X\)-component is detected by the \(Z\)-type checks, while the
\(Z\)-component is detected by the \(X\)-type checks. 
For a CSS code, the residual mismatch vectors are
\[
\bm{\delta}^X
\triangleq
\bm{s}^X\oplus H_X\hat{\bm e}^Z,
\qquad
\bm{\delta}^Z
\triangleq
\bm{s}^Z\oplus H_Z\hat{\bm e}^X,
\]
where the superscripts on \(\bm{\delta}^X\) and \(\bm{\delta}^Z\)
label the \(X\)- and \(Z\)-type check syndromes, respectively. Thus,
\(\bm{\delta}^X\) is associated with the \(X\)-type checks and the
estimated \(Z\)-component, while \(\bm{\delta}^Z\) is associated with
the \(Z\)-type checks and the estimated \(X\)-component. The total mismatch
weight is
\[
w
\triangleq
\|\bm{\delta}^X\|_1+\|\bm{\delta}^Z\|_1.
\]

The cluster state is formed in the same way as in the independent Pauli-\(X\)
case, except that each VN now observes unsatisfied checks from both Tanner
graphs. To distinguish the two CSS Tanner graphs, let \(c_a^X\) denote the
\(a\)th \(X\)-type check node, corresponding to row \(a\) of \(H_X\), and let
\(c_b^Z\) denote the \(b\)th \(Z\)-type check node, corresponding to row \(b\)
of \(H_Z\). For qubit \(i\), define
\[
\mathcal{N}_X(v_i)
\triangleq
\{c_a^X \mid H_X(a,i)=1\},
\]
\[
\mathcal{N}_Z(v_i)
\triangleq
\{c_b^Z \mid H_Z(b,i)=1\}.
\]
The depolarizing-channel local mismatch weight is then
\[
\omega_i
\triangleq
\sum_{c_a^X\in\mathcal{N}_X(v_i)} \delta^X_a
+
\sum_{c_b^Z\in\mathcal{N}_Z(v_i)} \delta^Z_b .
\]
Thus, \(\omega_i\) counts the number of currently unsatisfied \(X\)- and
\(Z\)-type checks adjacent to qubit \(i\), using the residual mismatch vectors
\(\bm{\delta}^X\) and \(\bm{\delta}^Z\). The cluster histogram is then
constructed from the values of \(\omega_i\) inside each cluster and quantized
exactly as in Section~\ref{sec:clusterState}. Hence, the same
permutation-invariant cluster-state representation applies.

When a cluster is selected, all qubits in that cluster are updated in parallel
using the quaternary BP update rule. If the Pauli decision at qubit \(i\) changes
from \(\hat q_i\) to \(\hat q'_i\), define
\[
\Delta e_i^X
\triangleq
e^X(\hat q_i)\oplus e^X(\hat q'_i),
\qquad
\Delta e_i^Z
\triangleq
e^Z(\hat q_i)\oplus e^Z(\hat q'_i).
\]
Then the residual mismatches are updated as
\[
\bm{\delta}^{X\prime}
=
\bm{\delta}^X
\oplus
H_X\Delta\bm e^Z,
\qquad
\bm{\delta}^{Z\prime}
=
\bm{\delta}^Z
\oplus
H_Z\Delta\bm e^X.
\]
Thus, the depolarizing-channel version uses the same cluster-selection,
reward, and Q-learning structure, with the binary residual mismatch replaced by the pair \((\bm{\delta}^X,\bm{\delta}^Z)\). The full quaternary BP message-update details generalize the RL-S decoder in \cite{moradi2026rlbp}; here, the learned component only changes the scheduling action from a single VN to a cluster of VNs.

\subsection{Training and greedy inference}

During training, the action is selected using an $\epsilon$-greedy policy over the current remaining set $\mathcal{R}_t$. At training episode $\nu$, the decoder selects an action uniformly at random from $\mathcal{R}_t$ with probability $\epsilon_\nu$. Otherwise, with probability $1-\epsilon_\nu$, it selects an action that maximizes $Q(\sigma_a,a)$ over all $a\in\mathcal{R}_t$. Ties between maximizing actions are broken deterministically.
In our implementation, the exploration probability decreases linearly according to
\[
\epsilon_\nu
=
\max\left\{
\epsilon_{\min},
\epsilon_0
\left(
1-\frac{\nu-1}{E_{\max}-1}
\right)
\right\},
~
\nu\in\{1,\ldots,E_{\max}\}.
\]
After training, the learned policy is used greedily by setting
$\epsilon=0$. Thus, at scheduling step $t$, the selected action satisfies
\[
a_t
\in
\arg\max_{a\in\mathcal{R}_t} Q(\sigma_a,a).
\]
Thus, one BP iteration requires only $N_{\mathrm{cl}}$ scheduling decisions rather than $n$, while each decision activates a full cluster of VNs in parallel. This is precisely the latency--parallelism tradeoff introduced by clustered RL-S: as $B$ increases, the number of scheduling decisions per iteration decreases, but each action becomes coarser.

The training and inference procedures for the independent Pauli-\(X\) setting are summarized in
Algorithms~\ref{alg:cluster_qlearning} and~\ref{alg:cluster_inference},
respectively. Algorithm~\ref{alg:cluster_qlearning} describes the offline
Q-learning procedure used to learn the cluster-selection table, while
Algorithm~\ref{alg:cluster_inference} describes the greedy decoding procedure
used after training. In both algorithms, the main difference from VN-level RL-S
is that the action is a cluster index rather than a single VN index. The
depolarizing-channel implementation follows the same structure, with the binary
hard decision and residual mismatch replaced by the quaternary decision and the
pair \((\bm{\delta}^X,\bm{\delta}^Z)\).

\begin{algorithm}[t]
\caption{Offline training of the clustered RL-S decoder for the independent Pauli-\(X\) channel}
\label{alg:cluster_qlearning}
\begin{algorithmic}[1]
\State Initialize Q-table $Q \leftarrow 0$
\For{episode $=1$ to $E_{\max}$}
    \State Sample $p_x$ uniformly from $\mathcal{P}_{\mathrm{train}}$
    \State Sample $\bm e \sim \mathrm{Bernoulli}(p_x)$ and compute $\bm s = H_1 \bm e$
    \State Set $\mu_x = \log\!\left(\frac{1-p_x}{p_x}\right)$ and initialize beliefs/messages
    \State Initialize $\hat{\bm e}$ and residual mismatch $\bm{\delta}=\bm s\oplus(H_1\hat{\bm e})$
    \For{BP-iteration $=1$ to $I_{\max}$}
        \State $\mathcal{R} \leftarrow \{1,2,\ldots,N_{\mathrm{cl}}\}$
        \For{step $=1$ to $N_{\mathrm{cl}}$}
            \If{$\bm{\delta}=\bm 0$}
                \State break
            \EndIf
            \State Compute quantized cluster states $\bm{\sigma}_a$ $\forall$ $a\in\mathcal{R}$
            \State Choose $a_t\in\mathcal{R}$ by $\epsilon$-greedy on $Q(\bm{\sigma}_a,a)$
            \State $w_{\mathrm{before}} \leftarrow \|\bm{\delta}\|_1$
            \State Apply one parallel cluster update on $\mathcal{C}_{a_t}$
            \State $w_{\mathrm{after}} \leftarrow \|\bm{\delta}\|_1$
            \State $r_t \leftarrow \dfrac{w_{\mathrm{before}}-w_{\mathrm{after}}}
            {\sum_{i\in\mathcal{I}_{a_t}} A_i}$
            \If{$w_{\mathrm{after}}=0$}
                \State $r_t \leftarrow r_t + 1$
            \EndIf
            \State $\mathcal{R} \leftarrow \mathcal{R}\setminus\{a_t\}$
            \If{$\mathcal{R}=\emptyset$}
                \State $V_{t+1} \leftarrow 0$
            \Else
                \State Recompute $\bm{\sigma}_{a'}$ for all $a'\in\mathcal{R}$
                \State $V_{t+1} \leftarrow \max_{a'\in\mathcal{R}} Q(\bm{\sigma}_{a'},a')$
            \EndIf
            \State $Q(\bm{\sigma}_{a_t},a_t) \leftarrow Q(\bm{\sigma}_{a_t},a_t)
            + \alpha\!\left(r_t + \gamma V_{t+1} - Q(\bm{\sigma}_{a_t},a_t)\right)$
        \EndFor
    \EndFor
\EndFor
\end{algorithmic}
\end{algorithm}

Algorithm~\ref{alg:cluster_qlearning} gives the offline training procedure for
the clustered RL-S decoder. Line~1 initializes the Q-table. Lines~2--6 generate
one training episode by sampling the channel parameter, generating an error and
its syndrome, initializing the BP messages and beliefs, and forming the initial
hard decision and residual mismatch. Lines~7--8 start the BP iterations and
initialize the set of available clusters for the current iteration. Lines~9--12
perform the cluster-selection steps within one BP iteration and stop early if
the residual mismatch has already become zero. Line~13 computes the current
quantized state of each remaining cluster. Line~14 selects a cluster using the
\(\epsilon\)-greedy rule. Lines~15--17 apply the parallel cluster update and
measure the change in the residual mismatch weight. Lines~18--21 compute the
one-step reward, including the terminal bonus upon syndrome convergence. Line~22
removes the selected cluster from the remaining set so that each cluster is
visited at most once per BP iteration. Lines~23--28 compute the one-step
lookahead value over the remaining clusters. Finally, line~29 updates the
Q-table entry associated with the selected cluster and its current quantized
state.

\begin{algorithm}[t]
\caption{Greedy inference of the clustered RL-S decoder for the independent Pauli-\(X\) channel}
\label{alg:cluster_inference}
\begin{algorithmic}[1]
\State Input: measured syndrome $\bm s$, learned table $Q$
\State Initialize BP messages, beliefs, hard decision $\hat{\bm e}$, and residual mismatch $\bm{\delta}=\bm s\oplus(H_1\hat{\bm e})$
\For{BP-iteration $=1$ to $T$}
    \State $\mathcal{R} \leftarrow \{1,2,\ldots,N_{\mathrm{cl}}\}$
    \For{step $=1$ to $N_{\mathrm{cl}}$}
        \If{$\bm{\delta}=\bm 0$}
            \State return $\hat{\bm e}$
        \EndIf
        \State Compute quantized cluster states $\bm{\sigma}_a$ for all $a\in\mathcal{R}$
        \State Select
        \[
        a_t \in \arg\max_{a\in\mathcal{R}} Q(\bm{\sigma}_a,a)
        \]
        \State Apply one parallel cluster update on $\mathcal{C}_{a_t}$
        \State $\mathcal{R} \leftarrow \mathcal{R}\setminus\{a_t\}$
    \EndFor
\EndFor
\State Return final hard decision $\hat{\bm e}$
\end{algorithmic}
\end{algorithm}

Algorithm~\ref{alg:cluster_inference} gives the greedy decoding procedure used
after training. Line~1 takes the measured syndrome and the learned Q-table as
inputs. Line~2 initializes the BP messages, beliefs, hard decision, and residual
mismatch. Lines~3--4 start the BP iterations and reset the remaining cluster set
at the beginning of each iteration. Lines~5--8 process the clusters within the
current iteration and return the current estimate as soon as the residual
mismatch becomes zero. Line~9 computes the quantized states of the remaining
clusters. Line~10 selects the cluster with the largest learned Q-value. Line~11
updates all VNs in the selected cluster in parallel, and line~12 removes that
cluster from the remaining set. If no zero residual mismatch is found within the
iteration budget, line~15 returns the final hard decision.

\section{Complexity and Latency}
\label{sec:complexity_latency}

In this section, we compare BP, RL-S, and clustered RL-S algorithms in terms of the number of local VN updates and the serial decoding depth. We assume a sparse QLDPC Tanner graph with bounded node degrees and a fixed maximum number of iterations \(I_{\max}\). Then:
\begin{itemize}
    \item For conventional flooding BP, all VN updates within one iteration can be performed in parallel. Therefore, under ideal parallel hardware, the latency scales as $O(I_{\max}).$
Since each iteration updates all VNs and their incident messages, the total computational complexity scales as $O(I_{\max} n)$;
\item For RL-S, the decoder updates one VN at a time. Hence, one BP iteration requires up to \(n\) sequential VN updates. The total number of local updates is still \(O(I_{\max}n)\), so the computational complexity remains comparable to BP. However, the serial decoding depth scales as $O(I_{\max} n),$ because the VN updates are performed one by one;
\item Finally, for clustered RL-S, the VNs are partitioned into
\[
N_{\mathrm{cl}}=\left\lceil \frac{n}{B}\right\rceil
\]
clusters. During one BP iteration, the decoder selects clusters sequentially, but all VNs inside the selected cluster are updated in parallel. Therefore, the total number of local VN updates per iteration remains \(O(n)\), and the overall computational complexity is still $O(I_{\max} n).$
The main difference is in latency: the serial decoding depth now depends on the number of clusters rather than the blocklength. Thus, under the parallel cluster-update model, the latency scales as
$O(I_{\max} N_{\mathrm{cl}})$,
instead of \(O(I_{\max} n)\) for VN-level RL-S.
\end{itemize}  

The quantized cluster-state representation also keeps the Q-table size practical. The Q-table contains
\(N_{\mathrm{cl}}N_{\mathrm{states}}^{(L)}\) entries, where
\(N_{\mathrm{states}}^{(L)}\) is given in
\eqref{eq:num_states_L} for the independent Pauli-\(X\) decoder.
Thus, the allocated Q-table size is controlled by the quantization
resolution and the number of clusters, rather than directly by the
number of VNs in each cluster.

\section{Numerical Results}
\label{sec:numerical}

In this section, we evaluate the error-correction performance of our proposed clustered RL-S decoder. The main goal of these experiments is to study how much of the performance gain of VN-level learned sequential scheduling can be preserved when the scheduling action is moved from individual VNs to clusters of VNs. We report the block error rate as a function of the physical error probability.
A block is counted as erroneous if the decoder either fails to reach syndrome consistency or returns a syndrome-consistent correction whose residual action is logically nontrivial.

Throughout this section, \(B\) denotes the cluster size and \(L\) denotes the quantization resolution used in the normalized cluster-state representation introduced in Section~\ref{sec:clustered}. For a code of blocklength \(n\), the number of clusters is
\[
N_{\mathrm{cl}}=\left\lceil \frac{n}{B}\right\rceil .
\]
Thus, larger \(B\) reduces the number of scheduling decisions per BP iteration, while each scheduling action updates a larger group of VNs. The parameter \(L\) controls the resolution of the quantized histogram state.
The learned policies are trained offline by sampling the physical error probability uniformly from
\[
\mathcal{P}_{\mathrm{train}}=\{0.03,0.04,0.05,0.06,0.07\},
\]
and are then used greedily during inference. In the depolarizing-channel
comparisons, QBP denotes quaternary BP, and QBPGD denotes quaternary BP
with guided decimation.

\begin{figure}[t]
  \centering
  \includegraphics[width=1\linewidth]{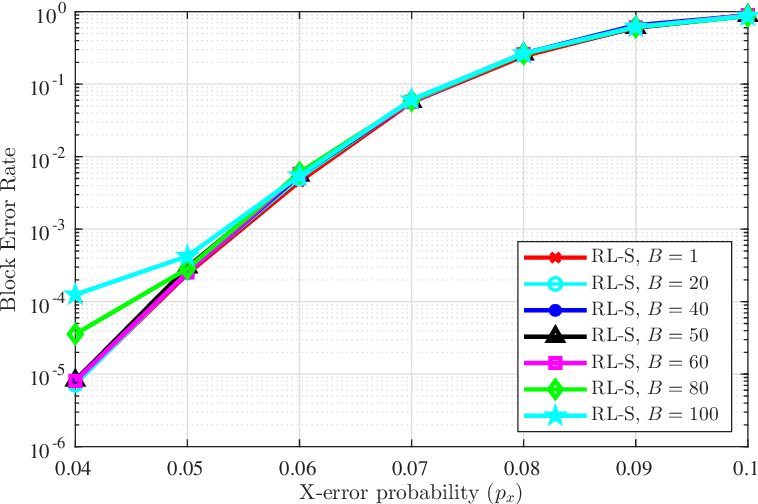}
  \caption{Block error rate of clustered RL-S decoding for the  \( [[882,24,18\le d\le 24]] \) code B1 over the independent Pauli-\(X\) channel. Different cluster sizes \(B\) are compared under the same maximum iteration budget \(T=100\).}
  \label{fig:B1_RL_SVNS_cluster}
\end{figure}

Fig.~\ref{fig:B1_RL_SVNS_cluster} shows the block error rate performance of clustered RL-S on the \( [[882,24,18\le d\le 24]] \) code B1 \cite{panteleev2021degenerate} over the independent Pauli-\(X\) channel. In this experiment, the maximum number of BP iterations is fixed to \(T=100\), and we compare several cluster sizes where we use the unquantized histogram state $\bm{c}_a$ in \eqref{eq:raw_cluster_state}. This figure studies the effect of replacing VN-by-VN scheduling with cluster-based scheduling. The results show that moderate clustering does not significantly degrade the decoding performance. In particular, increasing the cluster size up to about \(B=60\) preserves  the error rate behavior of the VN-level learned sequential schedule. This is important from a latency perspective. For \(B=60\), the \(n=882\) VNs are partitioned into
\[
N_{\mathrm{cl}}=\left\lceil \frac{882}{60}\right\rceil=15
\]
clusters, where the last cluster contains \(42\) VNs. Hence, one BP iteration requires only \(15\) scheduling decisions instead of \(882\) VN-level decisions. Therefore, the serial scheduling depth per iteration is reduced substantially, while the error-correction performance remains close to that of the VN-level learned schedule for the moderate cluster sizes shown in the figure.

\begin{figure}[t]
  \centering
  \includegraphics[width=1\linewidth]{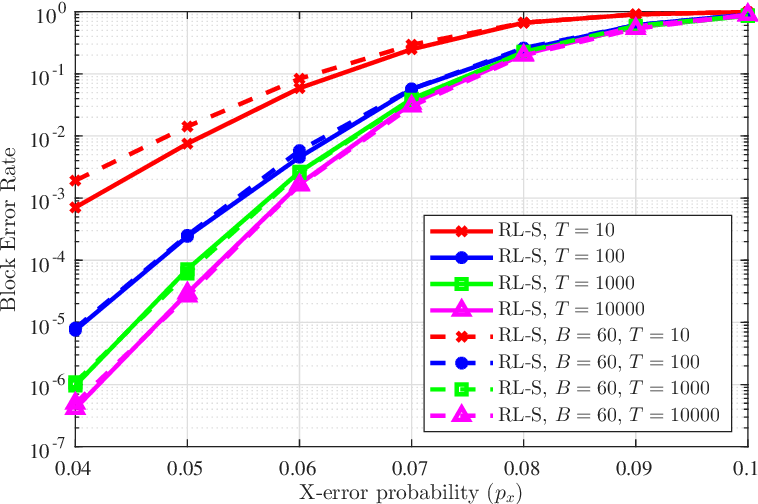}
  \caption{Block error rate for the  \( [[882,24,18\le d\le 24]] \) code B1 over the independent Pauli-\(X\) channel. The unclustered RL-S decoder is compared with clustered RL-S using cluster size \(B=60\) for several iteration budgets.}
  \label{fig:B1_RL_SVNS_cluster60}
\end{figure}

Fig.~\ref{fig:B1_RL_SVNS_cluster60} compares the unclustered RL-S decoder with the clustered decoder using \(B=60\) for different maximum iteration budgets with the code B1. For small iteration budgets, the clustered decoder can be more sensitive to the cluster update structure (assuming sufficiently large clusters), since each selected cluster applies several VN updates in parallel and therefore provides fewer opportunities to adapt the update order within one sweep. However, for iteration budgets of \(T=100\) and above, the clustered decoder closely follows the block error rate of the unclustered RL-S decoder. This indicates that our proposed cluster-level schedule preserves the main convergence benefit of learned sequential scheduling when the iteration budget is large enough.

\begin{figure}[t]
  \centering
  \includegraphics[width=1\linewidth]{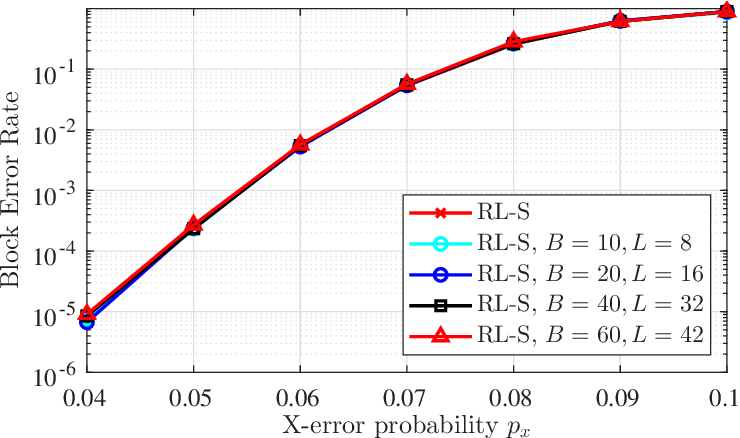}
  \caption{Effect of the quantization resolution on clustered RL-S decoding for the $[[882,24,18\leq d\leq24]]$ code B1 over the independent Pauli-$X$ channel. The iteration budget is fixed at $T=100$, while $B$ and $L$ are varied. The RL-S baseline ($B=1$) uses the original exact VN-level state, whereas all $B>1$ curves use the quantized histogram state.}
  \label{fig:RLS_B1_882_FER_Quantize}
\end{figure}

Fig.~\ref{fig:RLS_B1_882_FER_Quantize} evaluates our proposed quantized cluster-state representation on the code B1. Here, the cluster state is formed by the normalized histogram of local mismatch weights inside the cluster, followed by quantization with resolution \(L\).
Fig.~\ref{fig:B1_RL_SVNS_cluster60} uses the unquantized cluster
state $\bm{c}_a$ in \eqref{eq:raw_cluster_state}, which retains the
exact number of VNs having each local mismatch weight. In contrast,
Fig.~\ref{fig:RLS_B1_882_FER_Quantize} uses the quantized state
$\bm{\sigma}_a$ in \eqref{eq:quantized_cluster_state}, obtained by
normalizing the histogram and applying the simplex quantization in
\eqref{eq:simplex_quantization}; increasing \(L\) preserves finer
distinctions between normalized cluster histograms.
The results show that across the tested pairs of \(B\) and \(L\), the clustered decoders remain close to the unclustered RL-S baseline while using far fewer scheduling decisions per BP iteration. This suggests that the distribution of local mismatch weights provides a useful summary of how strongly the cluster is connected to the current residual syndrome.

\begin{figure}[t]
  \centering
  \includegraphics[width=1\linewidth]{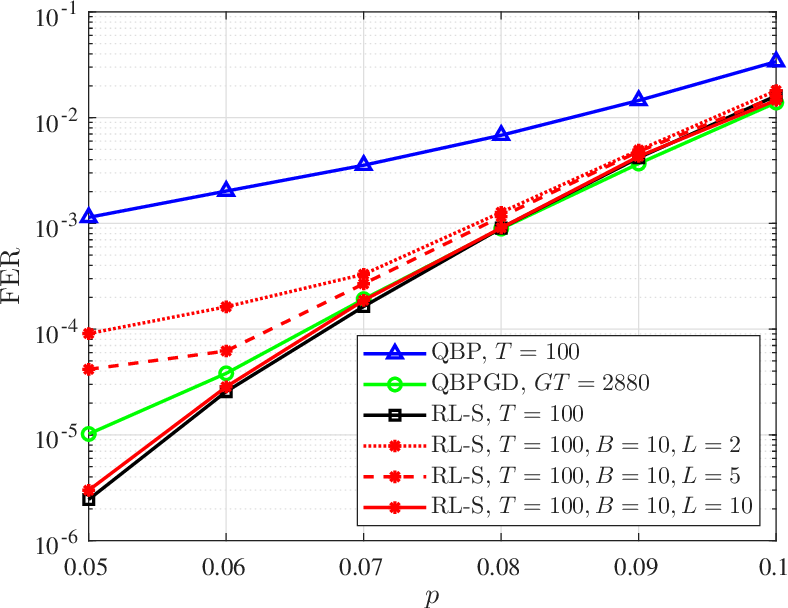}
  \caption{Block error rate comparison for the \( [[288,12,18]] \) BB code over the depolarizing channel. The proposed quantized clustered RL-S decoder is compared with QBP, QBPGD, and the unclustered RL-S decoder. The RL-S and clustered RL-S decoders use maximum iteration budget \(T=100\), while QBPGD uses \(288\) guided-decimation rounds with \(T=10\) BP iterations per round, i.e., total budget \(GT=2880\).}
  \label{fig:RLS_ClusterQuantized_BB288_FER}
\end{figure}

Fig.~\ref{fig:RLS_ClusterQuantized_BB288_FER} considers the \( [[288,12,18]] \) BB code \cite{bravyi2024high} over the depolarizing channel. In this experiment, our proposed quantized clustered RL-S decoder is compared with QBP, QBPGD, and the unclustered RL-S decoder. The QBPGD decoder uses \(288\) rounds of guided decimation with \(T=10\) BP iterations per round, corresponding to a total BP-iteration budget of \(2880\). The clustered learned schedules continue to provide a clear improvement over conventional QBP. The clustered decoder also remains competitive with the unclustered RL-S decoder. In particular, for \(L=10\), the error-correction performance is very close to the RL-S performance.

\section{Conclusions}
\label{sec:conclusion}

In this paper, we proposed a cluster-based extension of reinforcement-learning-based sequential BP decoding for QLDPC codes. Our proposed decoder reduces the serial scheduling depth of RL-S by selecting clusters of VNs rather than individual VNs. Once a cluster is selected, all VNs in that cluster are updated in parallel using the same pre-update information.
We also introduced a compact permutation-invariant cluster state based on the normalized histogram of local mismatch weights. After quantization with resolution \(L\), this representation keeps the tabular state space practical. We formulated the corresponding cluster-level MDP, reward function, Q-learning update, and parallel cluster-update rule.
Numerical results indicate that cluster-level learned scheduling is a promising approach for improving the latency--parallelism tradeoff of BP-based QLDPC decoders. Future work includes studying graph-aware cluster partitions, optimizing \(B\) and \(L\) for different QLDPC codes, and extending the numerical evaluation to additional channel models.



\bibliographystyle{IEEEtran}
\bibliography{bibliography}

\end{document}